\documentclass[12pt]{article}

\usepackage[a4paper,margin=1in]{geometry}
\usepackage[T1]{fontenc}
\usepackage[utf8]{inputenc}
\usepackage{lmodern}
\usepackage{microtype}
\usepackage{amsmath,amssymb,amsthm,mathtools}
\usepackage{booktabs}
\usepackage{enumitem}
\usepackage{setspace}
\usepackage[backend=biber,style=authoryear,maxcitenames=2,maxbibnames=99]{biblatex}
\newtheorem{theorem}{Theorem}[section]
\newtheorem{proposition}[theorem]{Proposition}
\newtheorem{lemma}[theorem]{Lemma}
\newtheorem{corollary}[theorem]{Corollary}

\newtheorem{axiom}{Axiom}[section]

\usepackage{hyperref}
\usepackage[nameinlink,noabbrev]{cleveref}
\hypersetup{colorlinks=true,linkcolor=black,citecolor=black,urlcolor=black}

\crefname{theorem}{Theorem}{Theorems}
\crefname{proposition}{Proposition}{Propositions}
\crefname{lemma}{Lemma}{Lemmas}
\crefname{corollary}{Corollary}{Corollaries}
\crefname{definition}{Definition}{Definitions}
\crefname{axiom}{Axiom}{Axioms}
\crefname{remark}{Remark}{Remarks}
\crefname{example}{Example}{Examples}

\newcommand{\X}{X}
\newcommand{\M}{\mathcal M}
\newcommand{\CO}{C^{O}}
\newcommand{\CF}{C^{F}}
\newcommand{\succO}{\succ^{O}}
\newcommand{\succF}{\succ^{F}}
\newcommand{\succeqN}{\succeq^{N}}
\newcommand{\succN}{\succ^{N}}
\newcommand{\succeqH}{\succeq^{H}}
\newcommand{\succH}{\succ^{H}}
\newcommand{\PNH}{P^{N,H}}
\newcommand{\POH}{P^{O,H}}
\newcommand{\Fset}{\mathfrak F}
\newcommand{\Iset}{\mathcal I}
\newcommand{\InvH}{\operatorname{Inv}_{H}}
\newcommand{\LE}{\operatorname{LE}}
\newcommand{\Max}{\operatorname{Max}}
\newcommand{\tc}{\operatorname{tc}}
\newcommand{\topc}{\operatorname{top}}

\title{Directional Revision under Two-Horizon Deliberation:\\A Revealed-Preference Analysis}
\author{Sinan Ertemel\\[0.25em]
\small Department of Economics, Istanbul Technical University\\
\small Ma\c{c}ka 34367 Istanbul, T\"urkiye\\
\small Corresponding author: \href{mailto:ertemels@itu.edu.tr}{ertemels@itu.edu.tr}\\
\small ORCID: 0000-0003-0089-4641}
\date{}

\begin{document}
\maketitle

\begin{abstract}
We study deterministic choice when the same decision maker is observed before and after a deliberative intervention that makes one of two fixed ordinal consequence dimensions more salient. Both choice modes are path independent. Ordinary choice respects two-dimensional dominance, while any binary reversal under fuller deliberation must favor the alternative that is strictly better on the emphasized dimension. We characterize exactly the admissible ordinary and full-horizon rankings. The characterization yields a protected partial order, sharp pairwise and menu-level identification, a signed restriction on menu-level choice changes, and an acyclicity test for incomplete observations. With an ordered sequence of deliberative modes, choices from any fixed menu form a monotone tradeoff path: every switch improves the emphasized dimension, worsens the other dimension, and an abandoned alternative cannot reappear. When the emphasized consequence order is strict, the admissible rankings have an equivalent Kemeny--Kendall representation. The framework is ordinal: it requires neither a cardinal tradeoff nor lexicographic priority, and it applies whenever deliberation gives greater emphasis to one of two fixed consequence dimensions.
\end{abstract}

\noindent\textbf{Keywords:} revealed preference; choice functions; preference revision; partial identification; Kemeny distance; deliberation.\\
\textbf{JEL codes:} D01, D11, D91.

\section{Introduction}\label{sec:intro}

Many economically relevant interventions leave the available alternatives and factual information unchanged but alter which already-known consequences are foregrounded in deliberation. A planning prompt may emphasize long-run welfare relative to current benefit; an ethical prompt may foreground moral consequences relative to private payoff; an environmental prompt may make future externalities more salient relative to immediate consumption; and a pro-social prompt may direct attention from self-interest toward effects on others. These settings share a revealed-choice question: if two ordinal consequence dimensions are fixed and an intervention gives one of them greater deliberative emphasis, what restrictions does that direction impose on observed choice? We study this question without specifying a cardinal aggregator, a target utility function, or a rule that resolves every cross-dimensional tradeoff.

We observe two single-valued choice functions of the same individual, called \emph{ordinary choice} and \emph{full-horizon choice}. The maintained comparison holds fixed the menu, factual information, feasibility, and two weak orders over consequences; only the deliberative mode changes. We use the mnemonic labels \emph{Now} ($N$) and \emph{Hereafter} ($H$) for these two consequence rankings. The terminology is partly motivated by an Islamic-economics application in which worldly and afterlife consequences are explicitly distinguished; \Cref{sec:motivation} gives a slightly fuller motivation. This application is only one interpretation of the general structure, and none of the formal results depends on it. The $N$- and $H$-orders are not two overall preferences or two selves. They are consequence rankings common to both modes, whereas the overall rankings revealed by ordinary and full-horizon choice may differ.

Several recent papers study nearby but distinct questions. \Textcite{Spohn2025Reflective} develops reflective choice for changing decision situations and allows epistemic and preference change; \textcite{GruneYanoff2025Constructed} studies when preference construction should be completed relative to action; and \textcite{Petersen2026Dynamic} studies dynamic consistency with incomplete but stable preferences. \Textcite{BreigFeldman2026RP} derives revealed-preference tests for parametric models in a risk-elicitation environment. Especially close in spirit, \textcite{BreigFeldman2024Revisions} studies choice revisions without informational change and interprets revised choices as mistakes. Here a revision is not presumed to correct an error, and we do not model decision-tree dynamics, preference construction, or a target parametric utility. Instead, we hold the decision environment and two consequence rankings fixed and characterize the cross-mode reversals consistent with a directional deliberative intervention.

The model imposes three restrictions. First, each choice mode is path independent and hence rationalized by a unique strict ranking. Second, ordinary choice respects \emph{two-horizon dominance}: an alternative that is weakly better on both consequence dimensions and strictly better on at least one cannot lose a binary ordinary-choice comparison. Third, \emph{Hereafter-Directed Revision} requires any binary reversal induced by full-horizon deliberation to favor the alternative that is strictly better under the $H$-ranking. The converse is deliberately absent: an $H$-advantage need not overturn every $N$-advantage.

The central characterization identifies a protected partial order generated jointly by ordinary revealed preference and the $H$-ranking. Full-horizon rankings are exactly its linear extensions. This yields sharp pairwise and menu-level identification and a signed choice implication: whenever ordinary and full-horizon choices differ on a menu, the full-horizon winner is strictly better under $H$, while the ordinary winner is strictly better under $N$. When $H$ is strict, the same restriction has an equivalent Kemeny--Kendall representation. With an ordered sequence of deliberative modes, the model further implies monotone menu trajectories: every actual switch improves $H$ and worsens $N$, an abandoned alternative cannot reappear, and the number of switches is bounded by the height of the induced tradeoff order.

The analysis also connects to broader revealed-choice and preference-revision literatures. Models such as \textcite{GulPesendorfer2001}, \textcite{BernheimRangel2009}, and \textcite{SilvaLeanderSeth2017} separate observed choice from other preference or welfare objects, while \textcite{Freund2005} and \textcite{HaretWallner2022} study formal preference revision under new information or authoritative input. Our two consequence rankings instead remain fixed across modes, and neither is assumed to be the individual's overall preference. The tie-free case connects the resulting behavioral restriction to standard betweenness geometry for linear orders \parencite{KemenySnell1962,Grandmont1978,BossertSprumont2014}.

Several tools used below are classical. Path independence supplies the within-mode order representation \parencite{Plott1973}; finite order-extension results supply the linear-extension and acyclicity arguments \parencite{DushnikMiller1941}; and Kemeny--Kendall distance supplies the tie-free geometry \parencite{Kendall1938,KemenySnell1962,Grandmont1978}. Our contribution is not new order theory, but the choice-theoretic structure that generates the protected relation and links it exactly to directional deliberation.

The rest of the paper is organized as follows. \Cref{sec:motivation} gives an illustrative benchmark. \Cref{sec:model} introduces the choice environment and axioms. \Cref{sec:char} proves the main characterization. \Cref{sec:id} develops identification and observable implications. \Cref{sec:geometry} studies the tie-free case and repeated deliberation. \Cref{sec:data} gives rationalizability tests for incomplete data and additional pairwise restrictions. \Cref{sec:conclusion} concludes.

\section{An illustrative benchmark}\label{sec:motivation}

A simple example makes the distinction between dominance and cross-horizon tradeoffs concrete. We use an Islamic-economic illustration because it gives a natural interpretation to the labels Now and Hereafter, while remaining only one application of the general framework. Among several Qur'anic passages that distinguish worldly and afterlife considerations, we focus on two passages especially suited to the present economic reading. Qur'an 2:219--220 explicitly links reflection to ``this world and the Hereafter,'' while Qur'an 28:77 combines seeking the Hereafter with not neglecting one's share of the world; see \textcite{saheehinternational1997} for the English translation. We use these sources only to motivate a model in which both horizons may matter without assuming that one always overrides the other. This is a limited economic reading, not an exegesis or a comprehensive account of Islamic decision making.

Consider a household that has met its obligations and has limited discretionary income remaining. It can either replace a functioning car with a newer one that provides additional comfort while maintaining a relatively lower voluntary charitable contribution, or keep the current car and make a higher contribution. Suppose the available surplus is insufficient to do both.

Holding other relevant considerations fixed, let the first alternative have the better Now consequence and the second the better Hereafter consequence. Neither dominates the other. By contrast, if one feasible alternative were weakly better on both dimensions and strictly better on at least one, the comparison would involve no cross-horizon sacrifice and would be a dominance comparison. The example is deliberately stylized; it does not claim that a car purchase has no moral or afterlife relevance or that a larger donation is always the uniquely appropriate choice.

The model does not determine which side of the tradeoff must be chosen. Fuller deliberation may leave the ordinary choice unchanged or may shift choice toward the alternative with the better Hereafter consequence. What it rules out is attributing a switch in the opposite direction to an intervention whose only maintained change is fuller attention to the two consequence dimensions. The restriction is therefore directional rather than lexicographic: it constrains revisions when they occur, not the level of choice in every tradeoff.

The next section replaces this example with arbitrary alternatives, menus, and ordinal Now and Hereafter rankings.

\section{Choice environment and axioms}\label{sec:model}

Let $\X$ be a finite set of alternatives, with $|\X|\ge2$, and let
\[
 \M=2^{\X}\setminus\{\varnothing\}
\]
be the set of all nonempty menus. A single-valued choice function $C:\M\to\X$ satisfies $C(S)\in S$ for every $S\in\M$.

There are two consequence rankings on $\X$. The \emph{Now ranking} $\succeqN$ and the \emph{Hereafter ranking} $\succeqH$ are weak orders: complete and transitive binary relations, possibly with ties. Their strict parts are denoted by $\succN$ and $\succH$. Thus, for example,
\[
 x\succH y
 \quad\Longleftrightarrow\quad
 x\succeqH y\ \text{and not }y\succeqH x.
\]
The formal analysis uses only these ordinal rankings.

Define strict two-horizon dominance by
\[
 x\PNH y
 \quad\Longleftrightarrow\quad
 x\succeqN y,\quad x\succeqH y,
 \quad\text{and}\quad
 [x\succN y\text{ or }x\succH y].
\]
The relation $\PNH$ is a strict partial order. Irreflexivity is immediate. For transitivity, suppose $x\PNH y$ and $y\PNH z$. Transitivity of $\succeqN$ and $\succeqH$ gives $x\succeqN z$ and $x\succeqH z$. At least one of $x\succN y$ or $x\succH y$ holds. If $x\succN y$, then $y\succeqN z$ implies $x\succN z$; if $x\succH y$, then $y\succeqH z$ implies $x\succH z$. Thus at least one strict improvement survives the composition, and $x\PNH z$.

We observe two choice functions. The ordinary choice function is $\CO$ and the full-horizon choice function is $\CF$. Both are defined on $\M$. They represent the same decision maker in two deliberative modes. The maintained comparison across modes holds fixed the menu, factual information, feasibility, and the two consequence rankings; only the deliberative mode changes.

Our first axiom imposes standard within-mode consistency.

\begin{axiom}[Path Independence]\label{ax:pi}
For $K\in\{O,F\}$ and all $S,T\in\M$,
\[
 C^K(S\cup T)
 =C^K\bigl(\{C^K(S),C^K(T)\}\bigr).
\]
\end{axiom}

On the present finite and single-valued universal domain, Path Independence yields the familiar rational-choice representation associated with \textcite{Plott1973}. We record the specialization used below.

\begin{lemma}[Choice representation; Plott]\label{lem:choice-rep}
A single-valued choice function $C:\M\to\X$ is path independent if and only if there exists a unique strict linear order $\succ$ such that, for every $S\in\M$, $C(S)$ is the $\succ$-best element of $S$.
\end{lemma}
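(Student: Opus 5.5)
The plan is to prove the two directions separately, with the substantive work in the ``only if'' direction. Throughout, $|\X|$ is finite and $C$ is single-valued on all nonempty menus.

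\emph{Sufficiency.} Suppose $\succ$ is a strict linear order and $C(S)$ is the $\succ$-best element of $S$. For menus $S$ and $T$, the $\succ$-best element of $S\cup T$ lies in $S$ or in $T$. In either case it is $\succ$-best in that menu, so it equals $C(S)$ or $C(T)$. It also beats the other of the two, since that other element belongs to $S\cup T$. Hence $C(S\cup T)=C(\{C(S),C(T)\})$.

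\emph{Necessity.} Assume Path Independence and define $x\succ y$ iff $x\neq y$ and $C(\{x,y\})=x$. This relation is irreflexive by construction, and it is complete and asymmetric because $C$ is single-valued on two-element menus.

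\emph{Key step: $C(S)$ is $\succ$-best in $S$.} I will prove this by induction on $|S|$. For $|S|\le 2$ it holds by definition. For $|S|\ge 3$, let $x=C(S)$ and take any $y\in S\setminus\{x\}$. Write $S=(S\setminus\{y\})\cup\{y\}$. Path Independence gives
\[
C(S)=C\bigl(\{C(S\setminus\{y\}),\,y\}\bigr).
\]
Since $x\neq y$, the right-hand side must equal $C(S\setminus\{y\})$, and so it equals $x$. Therefore $C(\{x,y\})=x$, that is, $x\succ y$. Since $y$ was arbitrary, $x$ beats every other element of $S$ under $\succ$. Notice that this argument does not even need the induction hypothesis: it only uses the decomposition $S=(S\setminus\{y\})\cup\{y\}$.

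\emph{Transitivity.} Transitivity of $\succ$ then follows from the key step. Suppose $x\succ y$ and $y\succ z$, with $x,y,z$ distinct. Apply the key step to $S=\{x,y,z\}$. The element $C(S)$ beats both others pairwise. It cannot be $y$, because $x\succ y$. It cannot be $z$, because $y\succ z$ and asymmetry rules out $z\succ y$. So $C(S)=x$, which gives $x\succ z$. Hence $\succ$ is a strict linear order that rationalizes $C$.

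\emph{Uniqueness.} Any strict linear order $\succ'$ that rationalizes $C$ must satisfy $x\succ' y$ iff $C(\{x,y\})=x$. This is exactly $\succ$, so the order is unique.

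The only step needing care is the key step. Path Independence is stated for arbitrary unions, so I would pick the decomposition that isolates a single rival $y$. Once that step is in place, transitivity and the rest are routine.
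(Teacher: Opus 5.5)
Your proof is correct. It has the same overall skeleton as the paper's proof: define $\succ$ from binary choices, show it is a linear order that rationalizes $C$, and read off uniqueness from binary menus. The difference is that you reverse the order of the two middle steps.

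The paper first proves transitivity by a direct computation on $\{x,y,z\}$. Path Independence applied to $\{x,y\}\cup\{y,z\}$ gives $C(\{x,y,z\})=x$, while under the hypothesis $z\succ x$ the split $\{x,z\}\cup\{y\}$ gives $y$. Only afterwards does it show that $C(S)$ beats every rival. It uses the decomposition $S=S\cup\{x,y\}$, which yields $C(S)=C(\{C(S),C(\{x,y\})\})$ and excludes $C(\{x,y\})=y$ in one line.

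You instead prove the pairwise-best property first, using $S=(S\setminus\{y\})\cup\{y\}$. You then obtain transitivity as a corollary on the menu $\{x,y,z\}$: a cyclic triple would contain no element beating both others.

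Your ordering needs only one Path Independence argument, and it shows that transitivity amounts to every triple having a pairwise winner. Your decomposition also gives a by-product: $C(S\setminus\{y\})=C(S)$ whenever $y\neq C(S)$. The paper's ordering keeps transitivity and maximality as two independent facts, and reaches maximality slightly more directly.
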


\begin{proof}
The ``if'' direction is immediate. For the converse, define $x\succ y$ for distinct $x,y$ whenever $C(\{x,y\})=x$. Single-valuedness makes $\succ$ complete and asymmetric. Suppose $x\succ y$ and $y\succ z$. Path Independence applied to $\{x,y\}$ and $\{y,z\}$ gives $C(\{x,y,z\})=x$. If $z\succ x$, Path Independence applied to $\{x,z\}$ and $\{y\}$ instead gives $C(\{x,y,z\})=y$, a contradiction. Thus $x\succ z$.

Now let $x=C(S)$. If some $y\in S\setminus\{x\}$ satisfied $y\succ x$, then Path Independence applied to $S$ and $\{x,y\}$ would imply $C(S)=y$, a contradiction. Hence $x$ is above every other element of $S$. Binary choices uniquely determine the order.
\end{proof}

Under Path Independence, write $\succO$ and $\succF$ for the strict linear orders rationalizing $\CO$ and $\CF$.

The first substantive axiom concerns dominance.

\begin{axiom}[Two-Horizon Dominance]\label{ax:thd}
For distinct $x,y\in\X$, if $x\PNH y$, then
\[
 \CO(\{x,y\})=x.
\]
\end{axiom}

The axiom is silent when the two horizons conflict; it excludes only an ordinary binary choice of a dominated alternative.

The second substantive axiom concerns changes between deliberative modes.

\begin{axiom}[Hereafter-Directed Revision]\label{ax:adr}
For distinct $x,y\in\X$, if
\[
 \CF(\{x,y\})\neq\CO(\{x,y\}),
\]
then
\[
 \CF(\{x,y\})\succH\CO(\{x,y\}).
\]
\end{axiom}

The axiom is one-sided: $x\succH y$ does not require full-horizon choice to select $x$. It constrains only reversals; in particular, a pair tied under $\succeqH$ cannot reverse across modes.

\section{Characterization}\label{sec:char}

Fix $\succO$ and $\succeqH$. Call an ordinary comparison \emph{protected} when its ordinary winner is weakly no worse in the Hereafter. Define the protected ordinary relation $\POH$ by
\[
 x\POH y
 \quad\Longleftrightarrow\quad
 x\succO y\ \text{and}\ x\succeqH y.
\]
The relation $\POH$ is a strict partial order. It is irreflexive because $\succO$ is irreflexive. If $x\POH y$ and $y\POH z$, then $x\succO z$ by transitivity of $\succO$ and $x\succeqH z$ by transitivity of $\succeqH$, so $x\POH z$.

For a strict partial order $P$, let $\LE(P)$ denote the set of strict linear orders extending $P$.

\begin{lemma}[Directional-revision characterization]\label{lem:adr-char}
Suppose Path Independence holds. For fixed $\succO$ and $\succeqH$, Hereafter-Directed Revision holds if and only if
\[
 \succF\in\LE(\POH).
\]
\end{lemma}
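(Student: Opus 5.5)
The proof reduces Hereafter-Directed Revision to a pairwise statement about the two strict linear orders given by \Cref{lem:choice-rep}. Under Path Independence, for distinct $x,y$ we have $\CO(\{x,y\})=x$ iff $x\succO y$, and $\CF(\{x,y\})=x$ iff $x\succF y$. So a binary reversal on $\{x,y\}$ means exactly that $x\succO y$ and $y\succF x$. In that case the full-horizon winner is $y$ and the ordinary winner is $x$. Hence \Cref{ax:adr} is equivalent to the condition
\[
 \text{for all distinct } x,y:\quad x\succO y \text{ and } y\succF x \;\Longrightarrow\; y\succH x.
\]
The rest of the argument translates this condition into the linear-extension statement.

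For sufficiency of \Cref{ax:adr}, suppose $x\POH y$, so that $x\succO y$ and $x\succeqH y$. We need $x\succF y$. If instead $y\succF x$, the condition above gives $y\succH x$, which contradicts $x\succeqH y$. Since $\succF$ is a strict linear order on distinct elements, it follows that $x\succF y$. Thus $\succF$ contains $\POH$, and it is a strict linear order, so $\succF\in\LE(\POH)$.

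For the converse, suppose $\succF$ extends $\POH$, and take distinct $x,y$ with $x\succO y$ and $y\succF x$. If $x\succeqH y$ held, then $x\POH y$, and hence $x\succF y$. This contradicts the asymmetry of $\succF$. By completeness of the weak order $\succeqH$, it follows that $y\succH x$, which is the required direction of the reversal.

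No step is a real obstacle. The only point needing care is the bookkeeping in the first step: the reversal condition must be stated correctly in terms of $\succO$ and $\succF$, keeping track of which winner comes from which mode. It also relies on completeness of $\succeqH$, so that failure of $x\succeqH y$ gives $y\succH x$. The result is a pairwise check, so transitivity matters only through the fact, already shown, that $\POH$ is a strict partial order, which makes $\LE(\POH)$ well-defined.
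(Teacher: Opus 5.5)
Your proof is correct and follows essentially the same route as the paper's. Both use Lemma~\ref{lem:choice-rep} to translate a binary reversal into $x\succO y$ and $y\succF x$, argue by contradiction in each direction, and invoke completeness of $\succeqH$ to turn the failure of $x\succeqH y$ into $y\succH x$.
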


\begin{proof}
Suppose Hereafter-Directed Revision holds and take $x\POH y$. Then $x\succO y$, so Path Independence and Lemma~\ref{lem:choice-rep} imply $\CO(\{x,y\})=x$. If, contrary to the claim, $y\succF x$, then $\CF(\{x,y\})=y$ and the binary choice is reversed. Hereafter-Directed Revision would require $y\succH x$, contradicting $x\succeqH y$. Hence $x\succF y$, and $\succF$ extends $\POH$.

Conversely, suppose $\succF$ extends $\POH$ and a binary choice reverses. Relabeling the pair if necessary, Lemma~\ref{lem:choice-rep} gives $x\succO y$ and $y\succF x$. If $x\succeqH y$, then $x\POH y$, so extension of $\POH$ would give $x\succF y$, a contradiction. Therefore $x\not\succeqH y$. Completeness of $\succeqH$ yields $y\succeqH x$, and together with $x\not\succeqH y$ this is exactly $y\succH x$. Thus Hereafter-Directed Revision holds.
\end{proof}

\begin{theorem}[Two-horizon characterization]\label{thm:main}
Suppose Path Independence holds. Two-Horizon Dominance and Hereafter-Directed Revision hold if and only if
\[
 \succO\in\LE(\PNH)
 \qquad\text{and}\qquad
 \succF\in\LE(\POH).
\]
Moreover, whenever Two-Horizon Dominance holds,
\[
 \PNH\subseteq\POH.
\]
Consequently every admissible full-horizon ranking also respects two-horizon dominance.
\end{theorem}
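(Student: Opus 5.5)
The plan is to treat the two axioms separately, since under Path Independence each constrains a different one of the two linear orders supplied by \Cref{lem:choice-rep}. Hereafter-Directed Revision is already handled: by \Cref{lem:adr-char}, it is equivalent to $\succF\in\LE(\POH)$. The equivalence in \Cref{thm:main} therefore reduces to showing that Two-Horizon Dominance holds if and only if $\succO\in\LE(\PNH)$. The conjunction of the two equivalences then gives the stated biconditional.

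For this reduction, use \Cref{lem:choice-rep}: for distinct $x,y$, we have $\CO(\{x,y\})=x$ exactly when $x\succO y$. Under this translation, Two-Horizon Dominance says that $x\PNH y$ implies $x\succO y$. That is precisely the statement that the strict linear order $\succO$ contains the strict partial order $\PNH$, i.e.\ $\succO\in\LE(\PNH)$. Both directions are just this rereading. The paper has already established that $\PNH$ is a strict partial order, so $\LE(\PNH)$ is well defined.

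For the inclusion $\PNH\subseteq\POH$, take $x\PNH y$ under Two-Horizon Dominance. The axiom together with the representation gives $x\succO y$. The definition of $\PNH$ gives $x\succeqH y$. Hence $x\POH y$ by definition.

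The final claim follows by transitivity of inclusion. If $\succF\in\LE(\POH)$, then $\PNH\subseteq\POH\subseteq\succF$, so $\succF\in\LE(\PNH)$. Reading this back through \Cref{lem:choice-rep}, $\CF(\{x,y\})=x$ whenever $x\PNH y$, so the full-horizon ranking also respects two-horizon dominance.

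I do not expect a genuine obstacle. The only care needed is bookkeeping: in the ``if'' direction, the assumption that $\succO$ and $\succF$ are the rationalizing orders must be derived from Path Independence via \Cref{lem:choice-rep}, so that binary choices and the orders can be identified.
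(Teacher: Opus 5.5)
Your proposal is correct and follows essentially the same route as the paper. You reread Two-Horizon Dominance through Lemma~\ref{lem:choice-rep} as $\succO$ extending $\PNH$, take Hereafter-Directed Revision from Lemma~\ref{lem:adr-char}, and obtain $\PNH\subseteq\POH$ exactly as the paper does, while your explicit chain $\PNH\subseteq\POH\subseteq\succF$ supplies the final ``consequently'' step that the paper leaves implicit.
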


\begin{proof}
Under Path Independence, Two-Horizon Dominance is equivalent to requiring $\succO$ to extend $\PNH$. By Lemma~\ref{lem:adr-char}, Hereafter-Directed Revision is equivalent to requiring $\succF$ to extend $\POH$. This proves the characterization.

For the inclusion, let $x\PNH y$. By Path Independence and Two-Horizon Dominance, $\CO(\{x,y\})=x$ implies $x\succO y$, while the definition of $\PNH$ gives $x\succeqH y$. Hence $x\POH y$.
\end{proof}

The theorem identifies exactly which ordinary comparisons full-horizon choice must preserve. Comparisons aligned with the Hereafter ranking are protected; ordinary comparisons opposed by a strict Hereafter comparison may survive or reverse, subject to transitivity.

We next use two standard facts about finite partial orders. The first is the classical Dushnik--Miller representation of a partial order as the intersection of its linear extensions; the second is the corresponding extension fact that any maximal element of a subset can be placed first within that subset in some linear extension \parencite{DushnikMiller1941}. We state the two facts in the form needed for identification.

\begin{lemma}[Standard linear-extension facts]\label{lem:extensions}
Let $P$ be a strict partial order on a finite set.
\begin{enumerate}[label=(\roman*),leftmargin=*]
 \item $P$ is the intersection of its strict linear extensions:
 \[
 P=\bigcap_{L\in\LE(P)}L.
 \]
 \item For a nonempty menu $S$ and $x\in S$, there exists $L\in\LE(P)$ that ranks $x$ above every other element of $S$ if and only if $x$ is $P$-maximal in $S$.
\end{enumerate}
\end{lemma}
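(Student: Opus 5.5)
Both parts rest on a single construction: given a strict partial order $Q$ on the finite set, there is a linear extension that places a prescribed element or set ``as high as possible.'' I would first record the elementary extension step. Every finite strict partial order has a linear extension, obtained by repeatedly removing a $Q$-maximal element and listing the removed elements in order. A maximal element exists because $Q$ is acyclic: it is irreflexive and transitive on a finite set. Moreover, if $a$ and $b$ are $Q$-incomparable, then the transitive closure of $Q\cup\{(a,b)\}$ is again a strict partial order. To see this, note that a cycle in the closure would have to use the new pair $(a,b)$. Removing one use of that pair leaves a $Q$-path from $b$ back to $a$, which gives $b\,Q\,a$ or $b=a$, contradicting incomparability. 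This is the Szpilrajn/Dushnik--Miller step.

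For (i), the inclusion $P\subseteq\bigcap_{L\in\LE(P)}L$ is immediate from the definition of an extension. Conversely, suppose $(x,y)\notin P$ with $x\neq y$. If $y\,P\,x$, then every extension ranks $y$ above $x$, so $(x,y)$ lies in no $L$. Otherwise $x$ and $y$ are $P$-incomparable. Adding $(y,x)$ and closing transitively gives a strict partial order $P'\supseteq P$. Any linear extension of $P'$ lies in $\LE(P)$ and ranks $y$ above $x$, so $(x,y)$ is not in the intersection. The diagonal pairs are excluded on both sides by irreflexivity. Hence the two relations coincide.

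For (ii), necessity is immediate. If some $y\in S$ satisfies $y\,P\,x$, then every $L\in\LE(P)$ has $y\,L\,x$, so $x$ cannot be ranked above all of $S$. For sufficiency, let $x$ be $P$-maximal in $S$. Let $U=\{z: z\,P\,x\}$, which is disjoint from $S$ by maximality. I would define $P'$ as the transitive closure of $P\cup\{(x,s): s\in S\setminus\{x\}\}$.

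The main obstacle is checking that $P'$ is acyclic. Any cycle must use at least one new pair $(x,s)$. Trim the cycle so that it starts at $x$, leaves via a new pair $(x,s)$, and then returns to $x$. The return stretch either consists only of $P$-pairs, giving $s\,P\,x$ or $s=x$, or passes through $x$ again earlier, which lets us shorten the cycle. In the first case the contradiction with maximality is immediate. More carefully, take a shortest cycle. It has the form $x\to s$ (new pair) followed by a $P$-chain from $s$ to $x$, so by transitivity $s\,P\,x$ with $s\in S$. This contradicts the $P$-maximality of $x$ in $S$. Therefore $P'$ is a strict partial order. Any linear extension of $P'$, which exists by the first step, belongs to $\LE(P)$ and ranks $x$ above every other element of $S$.
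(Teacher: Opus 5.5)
Your proof is correct and follows essentially the same route as the paper: in both parts you augment $P$ by the required pairs, rule out cycles by extracting a $P$-path that contradicts either incomparability or maximality, and then take a linear extension of the transitive closure. The only differences are that you prove the existence of finite linear extensions (by successively removing maximal elements) instead of citing the order-extension theorem, and that your auxiliary set $U$ is never actually used.
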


\begin{proof}
Every linear extension contains $P$, so only the reverse inclusion in (i) requires proof. Fix distinct $x,y$ with $xPy$ false. If $yPx$, then every linear extension ranks $y$ above $x$, so $(x,y)$ is not in the intersection. If $x$ and $y$ are incomparable, augment $P$ by the edge $(y,x)$. This augmented relation is acyclic: a directed cycle would contain the new edge and therefore a $P$-path from $x$ back to $y$, which by transitivity of $P$ would imply $xPy$, contrary to assumption. Its transitive closure is therefore a strict partial order and, by the finite order-extension theorem, has a strict linear extension. That extension ranks $y$ above $x$. Hence $x$ is above $y$ in every linear extension if and only if $xPy$, proving (i).

For (ii), necessity is immediate: if some $y\in S$ satisfies $yPx$, every extension ranks $y$ above $x$. For sufficiency, suppose $x$ is $P$-maximal in $S$ and augment $P$ by the edges $(x,y)$ for all $y\in S\setminus\{x\}$. The augmented relation is acyclic. Indeed, any directed cycle must use at least one added edge, say $(x,y)$; after traversing that edge the cycle must return from $y$ to $x$ using only edges of $P$ before another added edge can be used, yielding a $P$-path from $y$ to $x$. Transitivity would imply $yPx$, contradicting maximality. The transitive closure of the augmented relation is thus a strict partial order and has a strict linear extension. In that extension, $x$ is above every other element of $S$.
\end{proof}

\section{Identification and observable implications}\label{sec:id}

Fix $\succO$ and $\succeqH$. Let
\[
 \Fset(\succO;\succeqH)=\LE(\POH)
\]
be the set of full-horizon rankings compatible with Hereafter-Directed Revision. Two-Horizon Dominance restricts which ordinary rankings are admissible in the full model, but conditional on $\succO$ the identification results below depend only on the directional-revision restriction.

\subsection{Pairwise and menu-level identification}

\begin{corollary}[Robust full-horizon relation]\label{thm:robust}
For distinct $x,y\in\X$,
\[
 x\POH y
 \quad\Longleftrightarrow\quad
 x\succF y
 \text{ for every }\succF\in\Fset(\succO;\succeqH).
\]
Equivalently,
\[
 \POH
 =\bigcap_{\succF\in\Fset(\succO;\succeqH)}\succF.
\]
\end{corollary}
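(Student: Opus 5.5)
The corollary follows by combining the definition $\Fset(\succO;\succeqH)=\LE(\POH)$ with Lemma~\ref{lem:extensions}(i), applied to $P=\POH$. The plan is first to confirm that the lemma's hypotheses hold. In Section~\ref{sec:char}, $\POH$ was shown to be a strict partial order (irreflexive and transitive) on the finite set $\X$. Lemma~\ref{lem:extensions}(i) therefore gives
\[
 \POH=\bigcap_{L\in\LE(\POH)}L=\bigcap_{\succF\in\Fset(\succO;\succeqH)}\succF,
\]
which is the second displayed form of the statement.

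Next, the pairwise equivalence is simply this set identity read elementwise. A pair $(x,y)$ with $x\neq y$ lies in the intersection if and only if $x\succF y$ for every $\succF\in\Fset(\succO;\succeqH)$. Hence $x\POH y$ holds exactly when every admissible full-horizon ranking puts $x$ above $y$.

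There is one point to note. The set $\Fset(\succO;\succeqH)$ is nonempty, because every finite strict partial order has a linear extension; indeed $\succO$ itself extends $\POH$, since $\POH\subseteq\succO$. So the intersection is over a nonempty family and is not vacuously the full relation.

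I do not expect a real obstacle here. All the substantive work is the construction in Lemma~\ref{lem:extensions}(i) of an extension reversing an incomparable pair, and that has already been done. By Lemma~\ref{lem:adr-char}, the members of $\Fset$ are exactly the full-horizon rankings consistent with Hereafter-Directed Revision, which justifies calling this relation the robust one.
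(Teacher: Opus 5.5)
Your proposal is correct and matches the paper's proof, which simply specializes the Dushnik--Miller intersection result in Lemma~\ref{lem:extensions}(i) to $P=\POH$, using $\Fset(\succO;\succeqH)=\LE(\POH)$. Your added observation that $\succO$ itself lies in $\LE(\POH)$ (so the intersection is not over an empty family) is a correct and harmless sanity check.
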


\begin{proof}
This is the Dushnik--Miller intersection result in Lemma~\ref{lem:extensions}(i), specialized to the model-implied protected relation $\POH$.
\end{proof}

For a strict partial order $P$, let $\Max(S,P)$ denote the set of $P$-maximal elements of a menu $S$. For a strict linear order $\succ$, let $\topc_{\succ}(S)$ denote its top element in $S$. Define the identified full-horizon choice set by
\[
 \Iset_F(S\mid\succO,\succeqH)
 =\{\topc_{\succF}(S):\succF\in\Fset(\succO;\succeqH)\}.
\]

\begin{corollary}[Sharp menu identification]\label{thm:menu}
For every nonempty menu $S$,
\[
 \Iset_F(S\mid\succO,\succeqH)
 =\Max(S,\POH).
\]
Thus full-horizon choice is point identified on $S$ if and only if $S$ has a unique $\POH$-maximal element.
\end{corollary}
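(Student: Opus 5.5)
The plan is to reduce the statement to Lemma~\ref{lem:extensions}(ii), applied to the protected relation $P=\POH$. The first step records that $\POH$ is a strict partial order on the finite set $\X$, which was verified right after its definition. The second step uses the definition $\Fset(\succO;\succeqH)=\LE(\POH)$ to rewrite the identified set as
\[
 \Iset_F(S\mid\succO,\succeqH)=\{\topc_{L}(S):L\in\LE(\POH)\}.
\]
Since each $L$ is a strict linear order, $\topc_L(S)$ is well defined.

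Next I prove the two inclusions. For $\subseteq$: take $x=\topc_L(S)$ for some $L\in\LE(\POH)$. Then $L$ ranks $x$ above every other element of $S$. The necessity half of Lemma~\ref{lem:extensions}(ii) then gives $x\in\Max(S,\POH)$. Directly: if $y\POH x$ for some $y\in S$, then $y\,L\,x$ because $L$ extends $\POH$, a contradiction. For $\supseteq$: take $x\in\Max(S,\POH)$. The sufficiency half of Lemma~\ref{lem:extensions}(ii) produces some $L\in\LE(\POH)$ that ranks $x$ above all of $S\setminus\{x\}$. Then $x=\topc_L(S)$, so $x\in\Iset_F$. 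Together the inclusions give $\Iset_F=\Max(S,\POH)$.

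For the point-identification clause, I argue as follows. Full-horizon choice on $S$ is pinned down exactly when $\Iset_F(S\mid\succO,\succeqH)$ is a singleton. By the equality just proved, this holds exactly when $\Max(S,\POH)$ is a singleton. I will also note that $\Max(S,\POH)$ is nonempty because $S$ is finite and $\POH$ is acyclic; this also follows from $\LE(\POH)\neq\varnothing$ via the order-extension theorem. So "not point identified" always means at least two candidates, never zero.

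The substantive step is the sufficiency direction of Lemma~\ref{lem:extensions}(ii), namely placing a maximal element on top within $S$. That step is already established, so nothing here should be an obstacle. The only care needed is to state the equivalence "point identified $\iff$ $|\Iset_F|=1$" as the definition being used, and to confirm nonemptiness so that this equivalence is meaningful.
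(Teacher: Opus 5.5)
Your proposal is correct and follows the paper's proof. Both reduce the claim to Lemma~\ref{lem:extensions}(ii) applied to $\POH$, using $\Fset(\succO;\succeqH)=\LE(\POH)$; your remark that $\Max(S,\POH)$ is nonempty is a harmless clarification the paper leaves implicit.
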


\begin{proof}
By Lemma~\ref{lem:adr-char} and the definition of $\Fset(\succO;\succeqH)$, the admissible full-horizon rankings are exactly the linear extensions of $\POH$. The claim follows from Lemma~\ref{lem:extensions}(ii).
\end{proof}

\begin{corollary}[Global point identification]\label{cor:unique}
The following statements are equivalent:
\begin{enumerate}[label=(\roman*),leftmargin=*]
 \item $\Fset(\succO;\succeqH)$ is a singleton;
 \item $\POH$ is a strict linear order;
 \item whenever $x\succO y$, one has $x\succeqH y$.
\end{enumerate}
If $\succeqH$ has no ties, these conditions are equivalent to $\succO=\succH$.
\end{corollary}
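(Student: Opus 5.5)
I would prove the cycle (i)$\Rightarrow$(ii)$\Rightarrow$(iii)$\Rightarrow$(i), relying on the facts that $\Fset(\succO;\succeqH)=\LE(\POH)$ and that $\POH\subseteq\succO$ by definition. The whole argument is an application of Lemma~\ref{lem:extensions}(i) (equivalently Corollary~\ref{thm:robust}) together with the observation that a strict partial order which is complete is linear.

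\emph{(i)$\Rightarrow$(ii).} Suppose $\Fset(\succO;\succeqH)=\{L\}$. By Corollary~\ref{thm:robust}, $\POH$ equals the intersection of all its linear extensions, which here is $L$ itself. Hence $\POH$ is a strict linear order.

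\emph{(ii)$\Rightarrow$(iii).} Let $x\succO y$. Since $\POH$ is linear, either $x\POH y$ or $y\POH x$. The latter would entail $y\succO x$, which contradicts the asymmetry of $\succO$. So $x\POH y$ holds, and in particular $x\succeqH y$.

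\emph{(iii)$\Rightarrow$(i).} Under (iii), $x\succO y$ implies $x\POH y$. Combined with $\POH\subseteq\succO$, this gives $\POH=\succO$, which is a strict linear order. A strict linear order has exactly one linear extension, namely itself: any extension $L$ contains $\POH$, and for each distinct pair exactly one orientation lies in $\POH$, so asymmetry of $L$ forbids the other orientation. Hence $\Fset$ is the singleton $\{\succO\}$.

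\emph{Tie-free case.} Suppose $\succeqH$ has no ties, so $\succH$ is a strict linear order and $x\succeqH y$ is equivalent to $x\succH y$ for distinct $x,y$. Then (iii) reads $\succO\subseteq\succH$. Two strict linear orders on the same set with one contained in the other must coincide: if $x\succH y$ but not $x\succO y$, then $y\succO x$, so $y\succH x$, contradicting asymmetry. Conversely, if $\succO=\succH$, then (iii) holds trivially.

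I do not expect any real obstacle. The only point that needs care is that the statement concerns strict orders, so one must check that completeness of $\POH$ over distinct pairs is exactly what is delivered in the step (ii)$\Rightarrow$(iii).
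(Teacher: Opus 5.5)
Your proof is correct and follows essentially the paper's route: the paper also uses the fact that a finite strict partial order has a unique linear extension exactly when it is linear, together with $\POH\subseteq\succO$ for (ii)$\Leftrightarrow$(iii) and the coincidence of nested strict linear orders in the tie-free case. The only difference is presentational: you arrange the equivalences as a cycle and derive the ``unique extension implies linear'' direction from the intersection result in Lemma~\ref{lem:extensions}(i), whereas the paper simply cites that fact as standard.
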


\begin{proof}
A finite strict partial order has a unique linear extension if and only if it is already linear. Since $\POH\subseteq\succO$, the relation $\POH$ is linear exactly when every ordinary comparison has a winner weakly no worse under $\succeqH$. If $\succeqH$ has no ties, two strict linear orders satisfy this condition exactly when they coincide.
\end{proof}

Thus the model leaves unresolved precisely those tradeoffs that the maintained assumptions do not discipline.

\subsection{The directional choice wedge}

The model also signs every observed disagreement between the two modes.

\begin{theorem}[Directional choice wedge]\label{thm:wedge}
Suppose \cref{ax:pi,ax:thd,ax:adr} hold. For a menu $S$, let
\[
 b=\CO(S),\qquad f=\CF(S).
\]
If $b\neq f$, then
\[
 f\succH b
 \qquad\text{and}\qquad
 b\succN f.
\]
Thus disagreement can occur only on a genuine cross-horizon tradeoff: the ordinary winner is Now-superior and the full-horizon winner is Hereafter-superior.
\end{theorem}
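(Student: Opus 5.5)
The plan is to reduce the menu-level statement to binary comparisons between the two winners $b$ and $f$, and then apply the two substantive axioms to that single pair. By \cref{lem:choice-rep}, $b=\CO(S)$ is the $\succO$-top element of $S$ and $f=\CF(S)$ is the $\succF$-top element. Since $b\neq f$ and both lie in $S$, we get $b\succO f$ and $f\succF b$. By \cref{lem:choice-rep} again, the binary choices are
\[
 \CO(\{b,f\})=b \quad\text{and}\quad \CF(\{b,f\})=f.
\]
So the pair $\{b,f\}$ exhibits a binary reversal across modes.

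\textbf{First claim.} Hereafter-Directed Revision applied to this reversed pair gives $\CF(\{b,f\})\succH\CO(\{b,f\})$, that is, $f\succH b$. Equivalently, via \cref{lem:adr-char}: if $b\succeqH f$, then $b\POH f$, and any $\succF\in\LE(\POH)$ would rank $b$ above $f$, contradicting $f\succF b$.

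\textbf{Second claim.} I argue by contradiction. Suppose $b\succN f$ fails; completeness of $\succeqN$ then gives $f\succeqN b$. Together with $f\succH b$, which implies $f\succeqH b$ and supplies the required strict improvement, this yields $f\PNH b$. Two-Horizon Dominance then forces $\CO(\{b,f\})=f$, contradicting $b\succO f$. Hence $b\succN f$.

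I expect no real obstacle here. The only point needing care is the step from menu-level winners to a binary reversal, which relies on the fact that under Path Independence the choice from any menu is the top element of the rationalizing linear order. The rest is a direct application of the definitions of $\succH$ and $\PNH$, together with completeness of the weak orders.
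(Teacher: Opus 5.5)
Your proposal is correct and follows the paper's proof essentially step for step: you reduce to the binary pair $\{b,f\}$ via \cref{lem:choice-rep}, apply Hereafter-Directed Revision to get $f\succH b$, and then use Two-Horizon Dominance together with completeness of $\succeqN$ to rule out $f\succeqN b$. Your alternative derivation of the first claim through \cref{lem:adr-char} is also valid but not needed.
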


\begin{proof}
Since $b$ is the $\succO$-best element of $S$ and $f\in S\setminus\{b\}$, we have $b\succO f$. Likewise, because $f$ is the $\succF$-best element of $S$, $f\succF b$. Lemma~\ref{lem:choice-rep} therefore implies $\CO(\{b,f\})=b$ and $\CF(\{b,f\})=f$. The binary comparison is reversed, so Hereafter-Directed Revision gives $f\succH b$.

Suppose, toward a contradiction, that $f\succeqN b$. Together with $f\succH b$, this gives $f\PNH b$. Two-Horizon Dominance then requires $\CO(\{b,f\})=f$, equivalently $f\succO b$, contradicting $b\succO f$. Hence $f\not\succeqN b$. Completeness of $\succeqN$ gives $b\succeqN f$, and the failure of $f\succeqN b$ makes the comparison strict: $b\succN f$.
\end{proof}

The result is local to the selected alternatives: it signs a change in choice without implying that either mode globally maximizes one consequence dimension.

\section{Tie-free geometry and ordered deliberation}\label{sec:geometry}

\subsection{Tie-free geometry}

Assume in this subsection that the Hereafter ranking has no ties, so $\succH$ is a strict linear order. The characterization then has a particularly transparent metric form.

For two strict linear orders $R$ and $R'$, let $d_K(R,R')$ be their Kendall distance, the number of unordered pairs on which the two orders disagree. A strict linear order $R$ is \emph{Kemeny-between} $R_0$ and $R_1$ when
\[
 d_K(R_0,R_1)=d_K(R_0,R)+d_K(R,R_1).
\]
This is the usual geodesic notion of betweenness generated by pairwise disagreement distance. For strict rankings it coincides with the familiar intermediate-order requirement that the intermediate ranking preserve every pair on which the two endpoint rankings agree \parencite{KemenySnell1962,Grandmont1978,BossertSprumont2014}.

\begin{corollary}[Kemeny representation]\label{cor:kemeny}
Suppose Path Independence holds and $\succH$ is a strict linear order. Then Hereafter-Directed Revision holds if and only if $\succF$ is Kemeny-between $\succO$ and $\succH$:
\[
 d_K(\succO,\succH)
 =d_K(\succO,\succF)+d_K(\succF,\succH).
\]
\end{corollary}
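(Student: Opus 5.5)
The plan is to reduce the metric condition to a pairwise condition and then invoke \Cref{lem:adr-char}. Under Path Independence, that lemma says Hereafter-Directed Revision holds if and only if $\succF\in\LE(\POH)$. When $\succH$ is a strict linear order, $x\succeqH y$ for distinct $x,y$ is the same as $x\succH y$. Hence $\POH=\succO\cap\succH$, the set of ordered pairs on which the two endpoint rankings agree. So it suffices to show the following: $\succF$ extends $\succO\cap\succH$ if and only if $d_K(\succO,\succH)=d_K(\succO,\succF)+d_K(\succF,\succH)$.

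For this equivalence, decompose the Kendall distance pair by pair. For each unordered pair $\{x,y\}$ of distinct alternatives, define
\[
\delta_{xy}=\mathbf 1[\succO,\succF\text{ disagree on }\{x,y\}]+\mathbf 1[\succF,\succH\text{ disagree on }\{x,y\}]-\mathbf 1[\succO,\succH\text{ disagree on }\{x,y\}].
\]
Then $d_K(\succO,\succF)+d_K(\succF,\succH)-d_K(\succO,\succH)=\sum_{\{x,y\}}\delta_{xy}$. Each order orients the pair in one of two ways, so there are two cases.

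\emph{Case 1: $\succO$ and $\succH$ disagree on $\{x,y\}$.} Then $\succF$ agrees with exactly one of them, and $\delta_{xy}=1+0-1=0$ whichever orientation $\succF$ takes.

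\emph{Case 2: $\succO$ and $\succH$ agree on $\{x,y\}$.} If $\succF$ agrees with both, then $\delta_{xy}=0$. If $\succF$ disagrees with both, then $\delta_{xy}=2$.

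So every $\delta_{xy}\ge 0$, and the triangle identity holds exactly when no pair falls into the second subcase of Case~2. That is, the identity holds exactly when $\succF$ ranks $x$ above $y$ whenever $x\succO y$ and $x\succH y$, which is the statement $\succF\in\LE(\POH)$. Combining this with \Cref{lem:adr-char} proves the corollary.

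There is no real obstacle here. The one point to state carefully is that the tie-free hypothesis makes $\succeqH$ and $\succH$ coincide off the diagonal, so that $\POH$ is exactly the agreement set of the two endpoint orders. The per-pair case count is elementary.
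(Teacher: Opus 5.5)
Your proof is correct and follows essentially the same route as the paper. Both arguments identify $\POH$ with $\succO\cap\succH$ under the tie-free hypothesis, invoke Lemma~\ref{lem:adr-char}, and check the distance identity pair by pair: disagreement pairs contribute zero net, and reversed agreement pairs contribute two, with your nonnegative $\delta_{xy}$ bookkeeping simply making that step a little more explicit.
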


\begin{proof}
When $\succH$ is strict, for distinct alternatives $x\succeqH y$ is equivalent to $x\succH y$, so $\POH=\succO\cap\succH$. By Lemma~\ref{lem:adr-char}, Hereafter-Directed Revision therefore holds exactly when $\succF$ preserves every pair on which the endpoint orders $\succO$ and $\succH$ agree.

Now evaluate the distance identity pair by pair. If $\succO$ and $\succH$ agree on an unordered pair, preserving that comparison contributes zero to all three relevant disagreement counts. If the endpoint orders disagree on the pair, the strict order $\succF$ must agree with exactly one endpoint, so that pair contributes one to $d_K(\succO,\succH)$ and exactly one in total to $d_K(\succO,\succF)+d_K(\succF,\succH)$. Hence preservation of every endpoint-agreement pair implies the displayed equality. Conversely, if the equality holds and $\succF$ reversed even one pair on which the endpoints agree, that pair would contribute zero to the left-hand side but two to the right-hand side, making equality impossible. Thus the distance equality is equivalent to extension of $\succO\cap\succH$, and hence to Hereafter-Directed Revision.
\end{proof}

The geometry is standard; the model-specific point is that Hereafter-Directed Revision places the observed full-horizon ranking in this interval even though $\succH$ ranks only one consequence dimension rather than serving as a target overall preference.

There is an equivalent inversion formulation. Label the alternatives $x_1,\ldots,x_n$ so that
\[
 x_n\succH x_{n-1}\succH\cdots\succH x_1.
\]
For a strict linear order $R$, define its set of Hereafter inversions by
\[
 \InvH(R)
 =\{(i,j):i<j\text{ and }x_i R x_j\}.
\]
An inversion places the Hereafter-worse alternative above the Hereafter-better one.

\begin{corollary}[Inversion characterization]\label{cor:inv}
Suppose Path Independence holds and $\succH$ is strict. Then Hereafter-Directed Revision holds if and only if
\[
 \InvH(\succF)\subseteq\InvH(\succO).
\]
Thus full-horizon deliberation may remove Hereafter inversions of ordinary preference but cannot create new ones.
\end{corollary}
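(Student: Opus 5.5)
The plan is to reduce the inversion statement to Lemma~\ref{lem:adr-char}, exactly as in the proof of Corollary~\ref{cor:kemeny}. Since $\succH$ is strict, $\POH=\succO\cap\succH$. Lemma~\ref{lem:adr-char} then says that Hereafter-Directed Revision holds if and only if $\succF$ agrees with $\succO$ on every pair where $\succO$ and $\succH$ agree. It therefore suffices to show that this agreement condition is equivalent to $\InvH(\succF)\subseteq\InvH(\succO)$. I will do this pair by pair, using the labeling $x_n\succH\cdots\succH x_1$. Under this labeling, for $i<j$ we have $x_j\succH x_i$. So $(i,j)\in\InvH(R)$ exactly when $R$ disagrees with $\succH$ on $\{x_i,x_j\}$.

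\emph{Necessity.} Assume $\succF\in\LE(\POH)$ and suppose $(i,j)\in\InvH(\succF)$, so $x_i\succF x_j$ with $i<j$. If $(i,j)\notin\InvH(\succO)$, then by completeness $x_j\succO x_i$. Together with $x_j\succH x_i$ this gives $x_j\POH x_i$, and extension forces $x_j\succF x_i$, a contradiction. Hence $(i,j)\in\InvH(\succO)$.

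\emph{Sufficiency.} Assume the inclusion and take $x\POH y$, that is, $x\succO y$ and $x\succH y$. Write $x=x_j$ and $y=x_i$ with $i<j$. Then $(i,j)\notin\InvH(\succO)$, so by the inclusion $(i,j)\notin\InvH(\succF)$. Because $\succF$ is complete on distinct pairs, this means $x_j\succF x_i$. Thus $\succF$ extends $\POH$, and Lemma~\ref{lem:adr-char} gives Hereafter-Directed Revision.

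I expect no real obstacle here. The only care needed is the index bookkeeping: confirming that an inversion places the $\succH$-worse alternative above the $\succH$-better one, and using the completeness of the strict linear orders to turn a non-inversion into the opposite comparison. The closing interpretive sentence of the corollary, that deliberation can remove but not create inversions, is just a restatement of the inclusion. Alternatively, the result could be derived from Corollary~\ref{cor:kemeny} by noting that $d_K(\cdot,\succH)=|\InvH(\cdot)|$, but the direct route above is cleaner.
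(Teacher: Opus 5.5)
Your proof is correct and is essentially the paper's pairwise argument. The only difference is that you pass through Lemma~\ref{lem:adr-char} and the identity $\POH=\succO\cap\succH$, whereas the paper argues directly from the axiom, noting that a pair in $\InvH(\succF)\setminus\InvH(\succO)$ is exactly a binary reversal toward the Hereafter-worse alternative; since that lemma is proved by the same pairwise reasoning, the two routes differ only in bookkeeping.
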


\begin{proof}
Fix a pair $x_i,x_j$ with $i<j$, so $x_j\succH x_i$. If $(i,j)\in\InvH(\succF)\setminus\InvH(\succO)$, then ordinary ranking places the Hereafter-better $x_j$ above $x_i$, while full-horizon ranking reverses the pair and places $x_i$ above $x_j$. The resulting binary reversal is Hereafter-worsening, so Hereafter-Directed Revision fails. Thus the axiom implies $\InvH(\succF)\subseteq\InvH(\succO)$.

Conversely, suppose the inclusion holds and a pair reverses from its ordinary to its full-horizon orientation. If the full-horizon winner were Hereafter-worse than the ordinary winner, that pair would be a Hereafter inversion under $\succF$ but not under $\succO$, contradicting the inclusion. Since $\succH$ is strict, every reversal must therefore move toward the Hereafter-better alternative, which is Hereafter-Directed Revision.
\end{proof}

Define the revision count
\[
 \Delta_H(\succO,\succF)
 =\bigl|\InvH(\succO)\setminus\InvH(\succF)\bigr|.
\]
It counts the ordinary Hereafter inversions removed under full-horizon deliberation.

\begin{proposition}[Monotone Kendall revision]\label{prop:distance}
Suppose Path Independence and Hereafter-Directed Revision hold and $\succH$ is a strict linear order. Then
\[
 \Delta_H(\succO,\succF)=d_K(\succO,\succF).
\]
Moreover, this number is the minimum number of adjacent swaps required to transform $\succO$ into $\succF$. The minimum-adjacent-swap interpretation is the standard Kendall-distance result \parencite{Kendall1938,KemenySnell1962}; the model adds that every swap in any shortest transformation moves the Hereafter-better of the two adjacent alternatives upward.
\end{proposition}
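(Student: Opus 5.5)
The plan has three parts: compare the two orders pair by pair, quote the swap bound, and then use a shortest swap sequence to get the directional refinement.

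\textbf{The identity.} Consider an unordered pair $\{x_i,x_j\}$ with $i<j$, so that $x_j\succH x_i$. The pair contributes to $d_K(\succO,\succF)$ exactly when $\succO$ and $\succF$ order it differently, which happens in two ways.

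\begin{itemize}
\item[(a)] $(i,j)\in\InvH(\succO)\setminus\InvH(\succF)$.
\item[(b)] $(i,j)\in\InvH(\succF)\setminus\InvH(\succO)$.
\end{itemize}

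Since $\InvH(\succF)\subseteq\InvH(\succO)$ by \cref{cor:inv}, case (b) is empty. Hence
\[
d_K(\succO,\succF)=\bigl|\InvH(\succO)\,\triangle\,\InvH(\succF)\bigr|=\Delta_H(\succO,\succF).
\]

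\textbf{The swap count.} That the minimum number of adjacent swaps turning $\succO$ into $\succF$ equals $d_K(\succO,\succF)$ is the cited classical fact, so I will quote it rather than reprove it. For completeness I would add a one-line sketch of both directions.
\begin{itemize}
\item An adjacent swap changes the orientation of exactly one pair, so at least $d_K$ swaps are needed.
\item Whenever the current order $R$ differs from $\succF$, some adjacent pair in $R$ is ordered opposite to $\succF$. Swapping it lowers the distance to $\succF$ by one, so $d_K$ swaps suffice.
\end{itemize}

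\textbf{The directional refinement.} Take any shortest transformation $R_0=\succO,R_1,\dots,R_m=\succF$ with $m=d_K(\succO,\succF)$.
\begin{itemize}
\item Each swap flips exactly one pair, and $m$ equals the number of pairs on which the endpoints disagree. So no pair is flipped twice, and every flipped pair is one on which $\succO$ and $\succF$ disagree.
\item Such a pair is, by the first step, an inversion of $\succO$ that is removed in $\succF$. So $\succO$ ranks the $\succH$-worse element $x_i$ above $x_j$, and $\succF$ ranks $x_j$ above $x_i$.
\item Because the pair is flipped exactly once, the swap that flips it moves $x_j$, the Hereafter-better element, from directly below $x_i$ to directly above it. That is the claimed direction.
\end{itemize}

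\textbf{Main obstacle.} The only delicate point is the claim that a shortest path flips no pair twice and touches only pairs where the endpoints disagree. I would prove it by counting. Let $k_p$ be the number of times pair $p$ is flipped. Then $\sum_p k_p=m$. A pair on which the endpoints disagree needs $k_p$ odd, hence $k_p\ge1$, and there are exactly $m$ such pairs. Together these force $k_p=1$ on disagreement pairs and $k_p=0$ on all others.
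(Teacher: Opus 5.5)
Your proof is correct. It differs from the paper's only in the directional refinement. The identity step is the same: the inclusion $\InvH(\succF)\subseteq\InvH(\succO)$ from \cref{cor:inv} empties one half of the symmetric difference. You also sketch the classical swap-count fact, which the paper simply cites.

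For the refinement, the paper uses the Hereafter-inversion count as a potential. Each adjacent swap changes $|\InvH(\cdot)|$ by exactly one. So if $m_{-}$ swaps lower it and $m_{+}$ raise it, then $m_{-}+m_{+}=d_K(\succO,\succF)$, while telescoping gives $m_{-}-m_{+}=|\InvH(\succO)|-|\InvH(\succF)|=d_K(\succO,\succF)$. Together these force $m_{+}=0$.

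You instead count flips pair by pair and show, without reference to $\succH$, that any shortest adjacent-swap path flips each pair on which the endpoints disagree exactly once and never touches any other pair. The direction then comes entirely from your identity step, since every disagreeing pair is an ordinary Hereafter inversion that $\succF$ removes.

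The paper's route is a two-line computation that uses the strict $\succH$ and the inclusion directly. Yours separates a purely combinatorial fact about shortest paths between linear orders from the single model-specific input. It also makes explicit something the paper's conclusion implies but does not state: pairs on which $\succO$ and $\succF$ agree are never swapped along a shortest transformation.
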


\begin{proof}
By Corollary~\ref{cor:inv}, Hereafter-Directed Revision implies $\InvH(\succF)\subseteq\InvH(\succO)$. Hence every pair on which $\succO$ and $\succF$ disagree is an inversion of $\succO$ removed in $\succF$, and no pair changes in the opposite direction. Hence the disagreement set is exactly $\InvH(\succO)\setminus\InvH(\succF)$, proving the equality with Kendall distance.

The minimum number of adjacent swaps between two linear orders equals their Kendall distance. Consider any shortest sequence. Each adjacent swap changes the number of Hereafter inversions by one. If $m_-$ swaps reduce the inversion count and $m_+$ increase it, then
\[
 m_-+m_+=d_K(\succO,\succF)
\]
and
\[
 m_- - m_+
 =|\InvH(\succO)|-|\InvH(\succF)|
 =d_K(\succO,\succF).
\]
Thus $m_+=0$. Every swap in a shortest sequence therefore reduces the Hereafter-inversion count. For an adjacent pair, reducing that count is equivalent to moving the Hereafter-better alternative above the Hereafter-worse one.
\end{proof}

The metric remains ordinal: $\Delta_H$ counts revised pairwise comparisons, not the intensity of consequence differences.

\subsection{More than two deliberative modes}

The same logic extends to an ordered sequence of deliberative contexts. Let $r=0,1,\ldots,L$ index contexts, let $C^r$ be a single-valued path-independent choice function, and let $\succ^r$ be its rationalizing order. The consequence rankings $\succeqN$ and $\succeqH$ are held fixed across contexts. A higher index represents a context in which the Hereafter dimension is made at least as salient as before.

\begin{axiom}[Stepwise Hereafter-Directed Revision]\label{ax:step}
For every $r<L$ and every distinct $x,y$, if
\[
 C^{r+1}(\{x,y\})\neq C^r(\{x,y\}),
\]
then
\[
 C^{r+1}(\{x,y\})\succH C^r(\{x,y\}).
\]
\end{axiom}

For distinct alternatives, define the \emph{directional tradeoff relation} $\rhd$ by
\[
 x\rhd y
 \quad\Longleftrightarrow\quad
 x\succH y\ \text{and}\ y\succN x.
\]
Thus $x\rhd y$ means that moving from $y$ to $x$ improves the Hereafter dimension while sacrificing the Now dimension. Because the strict parts of weak orders are transitive, $\rhd$ is a strict partial order. For a menu $S$, let $h_{\rhd}(S)$ denote the maximum number of alternatives in a $\rhd$-chain contained in $S$.

\begin{theorem}[Monotone menu trajectories]\label{thm:trajectory}
Suppose every $C^r$ is path independent, $C^0$ satisfies Two-Horizon Dominance, and Stepwise Hereafter-Directed Revision holds. Then:
\begin{enumerate}[label=(\roman*),leftmargin=*]
 \item every $C^r$ satisfies Two-Horizon Dominance;
 \item for every menu $S$ and every $0\le r<s\le L$, either
 \[
 C^r(S)=C^s(S),
 \]
 in which case the choice is the same at every intermediate context $q=r,\ldots,s$, or
 \[
 C^s(S)\succH C^r(S)
 \qquad\text{and}\qquad
 C^r(S)\succN C^s(S);
 \]
 \item after consecutive repetitions are deleted, the sequence
 \[
 C^0(S),C^1(S),\ldots,C^L(S)
 \]
 is a strict $\rhd$-chain. Consequently an alternative that is abandoned on a menu can never be selected again at a later context, and the number of choice changes on $S$ is at most
 \[
 h_{\rhd}(S)-1\le |S|-1.
 \]
\end{enumerate}
\end{theorem}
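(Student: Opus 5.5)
The plan is to reduce everything to the two-mode results already proved. The key step is to show that Stepwise Hereafter-Directed Revision between adjacent contexts implies Hereafter-Directed Revision between \emph{any} two contexts $r<s$. Once that holds, the pair $(C^r,C^s)$ can be fed into Theorem~\ref{thm:wedge}.

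\emph{Step 1 (part (i)).} I will argue by induction on $r$. For $r=0$ the claim is the hypothesis. Suppose $\succ^r$ extends $\PNH$. The inclusion argument in Theorem~\ref{thm:main} applies verbatim with $\succ^r$ in place of $\succO$, so $\PNH\subseteq P^{r,H}$, where
\[
 x\,P^{r,H}\,y \iff x\succ^r y \text{ and } x\succeqH y .
\]
Lemma~\ref{lem:adr-char} applied to the adjacent pair $(C^r,C^{r+1})$ gives $\succ^{r+1}\in\LE(P^{r,H})$. Hence $\succ^{r+1}$ extends $\PNH$, which is Two-Horizon Dominance for $C^{r+1}$.

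\emph{Step 2 (non-adjacent revision).} Fix $r<s$. I claim $\succ^s$ extends $P^{r,H}$. Take $x\,P^{r,H}\,y$. By Lemma~\ref{lem:adr-char}, $x\succ^{r+1}y$. Since $x\succeqH y$ is unaffected by the context, also $x\,P^{r+1,H}\,y$. Iterating gives $x\succ^s y$. Now apply Lemma~\ref{lem:adr-char} in the converse direction to the pair $(C^r,C^s)$: every binary reversal between contexts $r$ and $s$ favors the $\succH$-better alternative. I expect this transitivity-across-contexts step to be the main conceptual point. Lemma~\ref{lem:adr-char} is stated only for a single pair of modes, so I need to observe that the protected relation is inherited forward because the $\succeqH$ component is fixed.

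\emph{Step 3 (part (ii)).} By Step 1, $C^r$ satisfies Two-Horizon Dominance. By Step 2, $(C^r,C^s)$ satisfies Hereafter-Directed Revision. Both are path independent, so Theorem~\ref{thm:wedge} gives the following: if $C^r(S)\neq C^s(S)$, then $C^s(S)\succH C^r(S)$ and $C^r(S)\succN C^s(S)$.

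For the intermediate-constancy clause, suppose $C^r(S)=C^s(S)=b$ but $C^q(S)=c\neq b$ for some $r<q<s$. Applying the dichotomy to $(r,q)$ gives $c\succH b$. Applying it to $(q,s)$ gives $b\succH c$. These contradict asymmetry of $\succH$.

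\emph{Step 4 (part (iii)).} Delete consecutive repetitions from $C^0(S),\ldots,C^L(S)$. Each consecutive pair of distinct terms is then a genuine change, so by (ii) the later term $\rhd$ the earlier one. Since $\rhd$ is a strict partial order, it is transitive and irreflexive, so the reduced sequence is a strict $\rhd$-chain with pairwise distinct terms.

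No reappearance then follows, because an abandoned alternative would appear twice in the reduced chain. Alternatively, it follows directly from the constancy clause of (ii). The reduced chain lies in $S$, so it has at most $h_{\rhd}(S)$ elements and hence at most $h_{\rhd}(S)-1$ changes. Finally, $h_{\rhd}(S)\le|S|$ trivially.
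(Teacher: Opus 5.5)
Your proof is correct, but it is organized differently from the paper's. The paper works only with adjacent contexts. It shows that every actual change satisfies $C^{r+1}(S)\rhd C^r(S)$, re-running the wedge argument with part (i) supplying dominance at context $r$. It then obtains both halves of (ii), and all of (iii), from a single chaining argument on the reduced sequence, using transitivity and irreflexivity of $\rhd$. You instead lift the axiom itself. Your Step 2 shows that protected relations are inherited forward, $P^{r,H}\subseteq P^{r+1,H}$, so $\succ^s\in\LE(P^{r,H})$ for every $r<s$. You then apply Theorem~\ref{thm:wedge} as a black box to the pair $(C^r,C^s)$, and you derive intermediate constancy from asymmetry of $\succH$ rather than from irreflexivity of $\rhd$. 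The paper's route is shorter and stays entirely at the level of the chosen alternatives on the fixed menu $S$. Yours costs an extra propagation step but yields a stronger fact: every pair of contexts $r<s$, not just adjacent ones, satisfies Hereafter-Directed Revision. Consequently the two-mode results of Section~\ref{sec:id} apply to any such pair. The nesting $P^{0,H}\subseteq\cdots\subseteq P^{L,H}$ is also the ties-allowed counterpart of the nested inversion sets in Corollary~\ref{thm:chain}. One small point: Step 2 does not use Step 1, so your induction is redundant. Since $\succ^s\in\LE(P^{0,H})$ and $\PNH\subseteq P^{0,H}$, Two-Horizon Dominance at every context follows directly.
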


\begin{proof}
We first prove (i) by induction on $r$. The claim holds at $r=0$ by assumption. Suppose $C^r$ satisfies Two-Horizon Dominance and let $x\PNH y$. Then $C^r(\{x,y\})=x$ and, by definition of $\PNH$, $x\succeqH y$. If $C^{r+1}(\{x,y\})=y$, Stepwise Hereafter-Directed Revision would require $y\succH x$, contradicting $x\succeqH y$. Hence $C^{r+1}(\{x,y\})=x$, so $C^{r+1}$ also satisfies Two-Horizon Dominance.

For (ii), first consider adjacent contexts $r$ and $r+1$. Let $b=C^r(S)$ and $f=C^{r+1}(S)$, and suppose $b\neq f$. Path Independence implies $b\succ^r f$ and $f\succ^{r+1}b$, so the binary choice between $b$ and $f$ reverses. Stepwise Hereafter-Directed Revision gives $f\succH b$. If $f\succeqN b$, then $f\PNH b$, and part (i) would require $f\succ^r b$, contradicting $b\succ^r f$. Hence $b\succN f$. Thus every actual adjacent change satisfies $f\rhd b$.

Now fix $r<s$ and consider the choices from $S$ across contexts $r,\ldots,s$. Delete consecutive repetitions. Each remaining adjacent pair is related by $\rhd$. Transitivity of $\rhd$ therefore implies that, if the endpoints differ, $C^s(S)\rhd C^r(S)$, which is exactly the pair of strict comparisons displayed in (ii). If the endpoints coincide but some intermediate change occurred, transitivity would imply $C^r(S)\rhd C^r(S)$, contradicting irreflexivity. Hence equal endpoints imply no intermediate change.

Part (iii) follows from the same argument: the distinct choices form a strict $\rhd$-chain, so no alternative can reappear after it is left. The number of distinct selected alternatives is therefore at most $h_{\rhd}(S)$, and the number of changes is at most $h_{\rhd}(S)-1$.
\end{proof}

This menu-level implication does not require a tie-free Hereafter ranking. The sharper bound $h_{\rhd}(S)-1$ can be substantially smaller than $|S|-1$ when few alternatives are ordered oppositely by the two consequence rankings.

When the Hereafter ranking is strict, the same sequence also has a ranking-level representation in terms of nested inversion sets.

\begin{corollary}[Nested-inversion characterization]\label{thm:chain}
Suppose every $C^r$ is path independent and $\succH$ is strict. Stepwise Hereafter-Directed Revision holds if and only if
\[
 \InvH(\succ^{L})\subseteq\cdots\subseteq
 \InvH(\succ^{1})\subseteq\InvH(\succ^{0}).
\]
Hence a pair once revised in the Hereafter-improving direction cannot subsequently reverse back under the same maintained sequence of deliberative interventions.
\end{corollary}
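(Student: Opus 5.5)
The plan is to reduce the statement to the two-mode inversion characterization, Corollary~\ref{cor:inv}, applied separately to each adjacent pair of contexts. Stepwise Hereafter-Directed Revision is a conjunction over $r<L$ of conditions on the pair $(C^r,C^{r+1})$. Each such condition has exactly the form of Hereafter-Directed Revision with $C^r$ in the role of $\CO$ and $C^{r+1}$ in the role of $\CF$. Corollary~\ref{cor:inv} never used Two-Horizon Dominance, nor anything special about which choice function is labelled ordinary. It needed only Path Independence of both functions, supplied here by Lemma~\ref{lem:choice-rep} for each $C^r$, and strictness of $\succH$. So for each $r$ the adjacent condition is equivalent to $\InvH(\succ^{r+1})\subseteq\InvH(\succ^{r})$.

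The key steps, in order, are as follows.
\begin{enumerate}[label=(\arabic*),leftmargin=*]
 \item Observe that, for fixed $r$, the $r$-th clause of the axiom is verbatim the two-mode axiom for the pair $(C^r,C^{r+1})$.
 \item Invoke Corollary~\ref{cor:inv} to obtain the equivalence of that clause with the single inclusion $\InvH(\succ^{r+1})\subseteq\InvH(\succ^{r})$.
 \item Take the conjunction over $r=0,\ldots,L-1$. This shows that Stepwise Hereafter-Directed Revision holds if and only if every adjacent inclusion holds, which is precisely the displayed chain. Both directions are immediate because the chain is nothing more than the list of adjacent inclusions.
\end{enumerate}

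For the ``Hence'' clause, suppose a pair $x_i,x_j$ with $i<j$ is revised in the Hereafter-improving direction between some $q$ and $q+1$. Then $(i,j)\in\InvH(\succ^q)\setminus\InvH(\succ^{q+1})$. By the chain, $(i,j)\notin\InvH(\succ^{s})$ for all $s\ge q+1$, since each later inversion set is contained in $\InvH(\succ^{q+1})$. Hence $x_j\succ^s x_i$ at every later context, and the pair never reverses back. Transitivity of $\subseteq$ also gives the non-adjacent inclusions $\InvH(\succ^s)\subseteq\InvH(\succ^r)$ for $r<s$, which could be noted in passing.

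There is no substantive obstacle. The only point needing care is checking that Corollary~\ref{cor:inv} is genuinely a statement about an arbitrary pair of path-independent choice functions, so that it can be reused with shifted labels. I would make this explicit by rereading its proof, which only compares binary reversals with inversion membership.
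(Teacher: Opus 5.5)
Your proposal is correct and matches the paper's proof: both apply Corollary~\ref{cor:inv} to each adjacent pair $(\succ^r,\succ^{r+1})$ and note that the chain is simply the conjunction of the adjacent inclusions. Your explicit argument for the ``Hence'' clause, using $(i,j)\in\InvH(\succ^q)\setminus\InvH(\succ^{q+1})$ together with the later inclusions, is a sound addition that the paper leaves implicit.
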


\begin{proof}
Each $C^r$ is path independent by construction, so Corollary~\ref{cor:inv} applies to every adjacent pair $(\succ^r,\succ^{r+1})$. Stepwise Hereafter-Directed Revision is therefore equivalent, for each $r<L$, to $\InvH(\succ^{r+1})\subseteq\InvH(\succ^r)$. Collecting these adjacent inclusions gives the displayed chain, and the chain in turn implies every adjacent inclusion and hence the stepwise axiom.
\end{proof}

Thus menu choices are monotone in the tradeoff order even with ties; strict Hereafter rankings additionally imply monotone inversion sets.

\section{Additional restrictions and incomplete data}\label{sec:data}

The linear-extension representation also makes additional pairwise information transparent. We use the standard finite order-extension principle: an acyclic directed relation has an acyclic transitive closure, which is a strict partial order, and every finite strict partial order has a strict linear extension \parencite{DushnikMiller1941}. The model-specific step is that the additional comparisons must be combined with the protected relation. Let $R$ be any directed relation of additional pairwise requirements on distinct alternatives: $xRy$ means that the additional information requires $x$ to be ranked above $y$. Define
\[
 P_R=\tc(\POH\cup R),
\]
where $\tc$ denotes transitive closure.

\begin{proposition}[Pairwise augmentation]\label{thm:augment}
Fix $\succO$ and $\succeqH$. There exists a full-horizon ranking compatible with Hereafter-Directed Revision and all comparisons in $R$ if and only if $\POH\cup R$ is acyclic. When this condition holds, the admissible rankings are exactly
\[
 \LE(P_R),
\]
and the sharp identified choice set on any menu $S$ is
\[
 \Max(S,P_R).
\]
If $R\subseteq\widetilde R$ and both augmented relations are acyclic, the admissible ranking set and every menu-level identified choice set weakly shrink.
\end{proposition}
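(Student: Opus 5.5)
The plan is to reduce everything to Lemma~\ref{lem:adr-char} together with the finite order-extension principle and Lemma~\ref{lem:extensions}. By Lemma~\ref{lem:adr-char}, a strict linear order $\succF$ is compatible with Hereafter-Directed Revision exactly when it extends $\POH$. It is compatible with $R$ exactly when $xRy$ implies $x\succF y$. So the admissible set is
\[
\{L \text{ strict linear}: \POH\cup R\subseteq L\}.
\]
Since any strict linear order is transitive, $\POH\cup R\subseteq L$ holds if and only if $P_R=\tc(\POH\cup R)\subseteq L$. This identifies the admissible set with the set of strict linear orders containing $P_R$, which will be $\LE(P_R)$ once $P_R$ is shown to be a strict partial order.

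For the existence equivalence, I will argue necessity first. If some linear $L$ contains $\POH\cup R$, then any directed cycle in $\POH\cup R$ would yield a cycle in $L$. Transitivity would then give $xLx$, which contradicts irreflexivity. For sufficiency, acyclicity of $\POH\cup R$ means its transitive closure $P_R$ is irreflexive; it is transitive by construction. Hence $P_R$ is a strict partial order on a finite set, and the order-extension principle cited before the proposition gives a linear extension. By the first paragraph, that extension is an admissible ranking. This step also confirms that $\LE(P_R)$ is well defined, which completes the description of the admissible rankings.

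The menu-level claim then follows directly from Lemma~\ref{lem:extensions}(ii) applied to $P=P_R$. The set $\{\topc_L(S):L\in\LE(P_R)\}$ equals $\Max(S,P_R)$, which is the stated sharp identified set.

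For monotonicity, $R\subseteq\widetilde R$ gives $\POH\cup R\subseteq \POH\cup\widetilde R$, and hence $P_R\subseteq P_{\widetilde R}$ because transitive closure is monotone. Every linear extension of $P_{\widetilde R}$ therefore extends $P_R$, so $\LE(P_{\widetilde R})\subseteq\LE(P_R)$. Taking top elements on $S$ over a smaller family of rankings gives a weakly smaller set, and by the menu claim this means $\Max(S,P_{\widetilde R})\subseteq\Max(S,P_R)$. I expect no real obstacle here. The only point needing care is the equivalence between acyclicity of $\POH\cup R$ and irreflexivity of its transitive closure, and between containing the union and containing its closure. Both are routine facts about finite relations, and I will state them explicitly rather than leave them implicit.
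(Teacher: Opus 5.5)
Your proposal is correct and follows the paper's proof essentially step for step. You reduce via Lemma~\ref{lem:adr-char} to extending $\POH\cup R$, use that extending a relation is equivalent to extending its transitive closure, take acyclicity as the existence criterion, invoke Lemma~\ref{lem:extensions}(ii) for the menu claim, and use monotonicity of transitive closure. The only small variation is that you get the shrinking of $\Max(S,\cdot)$ from the shrinking family of linear extensions via the menu identity, whereas the paper observes directly that an element maximal under the larger relation is maximal under the smaller one; both are valid.
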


\begin{proof}
By Lemma~\ref{lem:adr-char}, a compatible full-horizon ranking must extend $\POH$, so imposing the comparisons in $R$ is equivalent to asking for a strict linear order extending $Q:=\POH\cup R$. Necessity of acyclicity is immediate because no strict linear order can contain a directed cycle. Conversely, if $Q$ is acyclic, then its transitive closure $P_R=\tc(Q)$ is irreflexive and transitive, hence a strict partial order. A finite strict partial order has a strict linear extension, so a compatible ranking exists. Moreover, a strict linear order extends $Q$ if and only if it extends $\tc(Q)$, which proves that the admissible rankings are exactly $\LE(P_R)$. The menu statement then follows from Lemma~\ref{lem:extensions}(ii). Finally, if $R\subseteq\widetilde R$, then $\tc(\POH\cup R)\subseteq\tc(\POH\cup\widetilde R)$. Any linear extension of the larger relation is therefore an extension of the smaller one, and any element maximal under the larger relation is also maximal under the smaller relation. This proves both monotonicity claims.
\end{proof}

Incomplete full-horizon observations are a special case. This is the paper's constrained analogue of the classical revealed-preference idea that finite choice data are rationalizable by an ordering when the revealed strict comparisons are cycle-free \parencite{Richter1966}. Here the protected comparisons generated by the model enter the revealed relation as additional maintained restrictions. Suppose
\[
 \mathcal D^F=\{(S_t,f_t)\}_{t=1}^T
\]
is a dataset in which $f_t\in S_t$ is the observed full-horizon choice. Let
\[
 R(\mathcal D^F)
 =\{(f_t,y):t=1,\ldots,T,\ y\in S_t\setminus\{f_t\}\}
\]
collect the directly revealed pairwise comparisons.

\begin{corollary}[Incomplete-data rationalizability]\label{cor:data}
Fix an ordinary ranking $\succO$ and a Hereafter ranking $\succeqH$. The dataset $\mathcal D^F$ is rationalizable by a path-independent full-horizon choice function satisfying Hereafter-Directed Revision if and only if
\[
 \POH\cup R(\mathcal D^F)
\]
is acyclic. If so, the rationalizing full-horizon rankings are exactly the linear extensions of
\[
 \tc\bigl(\POH\cup R(\mathcal D^F)\bigr),
\]
and the maximal elements of this relation give the sharp set of model-consistent choices on every unobserved menu.

If the full two-horizon model is imposed and the Now ranking $\succeqN$ is also fixed, then the pair consisting of the ordinary ranking $\succO$ and the dataset $\mathcal D^F$ is rationalizable by path-independent ordinary and full-horizon choice satisfying both substantive axioms if and only if
\[
 \succO\in\LE(\PNH)
 \qquad\text{and}\qquad
 \POH\cup R(\mathcal D^F)\text{ is acyclic}.
\]
\end{corollary}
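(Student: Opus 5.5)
The plan is to reduce the first part to Proposition~\ref{thm:augment} with $R=R(\mathcal D^F)$. Everything hinges on one equivalence, which I will establish first:
\begin{itemize}[leftmargin=*]
 \item[] the dataset is rationalized by a path-independent $\CF$ satisfying Hereafter-Directed Revision if and only if there is a strict linear order $\succF\in\LE(\POH)$ whose top element in $S_t$ is $f_t$ for every $t$.
\end{itemize}
Lemma~\ref{lem:choice-rep} gives the link between choice functions and orders. Any path-independent $\CF$ is represented by a unique $\succF$, and conversely any strict linear order defines a path-independent choice function by maximization. Lemma~\ref{lem:adr-char} shows that Hereafter-Directed Revision, for the fixed $\succO$ and $\succeqH$, is exactly $\succF\in\LE(\POH)$.

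Next I will show that consistency with the data means $\succF$ contains $R(\mathcal D^F)$. The condition that $\topc_{\succF}(S_t)=f_t$ for every $t$ says precisely that $f_t\succF y$ for all $y\in S_t\setminus\{f_t\}$. So the rationalizing rankings are exactly the strict linear orders extending $\POH\cup R(\mathcal D^F)$.

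Proposition~\ref{thm:augment} then delivers the remaining claims at once:
\begin{enumerate}[label=(\alph*),leftmargin=*]
 \item existence of a rationalization if and only if $\POH\cup R(\mathcal D^F)$ is acyclic;
 \item the admissible set equal to $\LE\bigl(\tc(\POH\cup R(\mathcal D^F))\bigr)$;
 \item the sharp identified choice set on any menu given by $\Max\bigl(S,\tc(\POH\cup R(\mathcal D^F))\bigr)$, via Lemma~\ref{lem:extensions}(ii).
\end{enumerate}
Item (c) covers unobserved menus in particular. On observed menus it returns $\{f_t\}$, because $f_t$ dominates every other element of $S_t$ in the augmented relation.

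For the second part, with $\succeqN$ also fixed, I will invoke Theorem~\ref{thm:main}. Jointly imposing both axioms under Path Independence is equivalent to two conditions: $\succO\in\LE(\PNH)$, and $\succF\in\LE(\POH)$ together with consistency with $\mathcal D^F$. The first condition involves only $\succO$, and $\POH$ depends only on $\succO$ and $\succeqH$. The two requirements therefore decouple. Existence of a suitable $\CF$ is governed by the first part, and existence of $\CO$ is trivial given $\succO$ (take the maximizing choice function). This yields the stated conjunction.

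I expect the only delicate point to be bookkeeping rather than mathematics. Two things need care: making explicit that the dataset may contain repeated or overlapping menus without affecting the argument, since $R$ simply collects all the requirements; and stating clearly that ``rationalizable'' means extendable to a choice function on all of $\M$. Lemma~\ref{lem:choice-rep} handles the latter, because every linear order induces a full path-independent choice function.
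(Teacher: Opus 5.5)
Your proposal is correct and follows the paper's proof. You apply Proposition~\ref{thm:augment} with $R=R(\mathcal D^F)$, after noting that a ranking rationalizes the data exactly when it places each $f_t$ above the rest of $S_t$. For the full model, you use Theorem~\ref{thm:main} to add the separate requirement $\succO\in\LE(\PNH)$. You also spell out the links through Lemma~\ref{lem:choice-rep} and Lemma~\ref{lem:adr-char}, and the observed-menu check, which the paper leaves implicit.
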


\begin{proof}
Apply Proposition~\ref{thm:augment} with $R=R(\mathcal D^F)$. A strict ranking rationalizes each full-horizon observation exactly when it places $f_t$ above every rejected alternative in $S_t$. This proves the first statement. For the full model, Theorem~\ref{thm:main} adds exactly the independent requirement that the fixed ordinary ranking extend $\PNH$; the full-horizon part is unchanged.
\end{proof}

Failure of the acyclicity test rejects the maintained restrictions jointly; it does not identify which component is misspecified.

\section{Concluding remarks}\label{sec:conclusion}

This paper develops a revealed-preference model of directional revision under two-dimensional deliberation. Two fixed consequence rankings are separated from the overall rankings revealed in two choice modes. A one-sided restriction on cross-mode reversals generates a protected partial order whose linear extensions are exactly the admissible full-horizon rankings. The characterization yields sharp identification, a signed menu-choice wedge, and finite rationalizability tests. Across ordered deliberative modes, actual choices form monotone tradeoff trajectories: every switch moves in the designated direction, abandoned alternatives cannot reappear, and the number of switches is bounded by the height of the induced tradeoff order. With a strict emphasized consequence order, the same restrictions admit the Kemeny--Kendall interpretation developed above.

The applications discussed in the introduction, including the Islamic-economic illustration, are formally on equal footing. What the framework requires is that two ordinal consequence rankings can reasonably be held fixed while an intervention has a defensible directional interpretation. It does not determine the threshold at which an improvement on one dimension should outweigh a loss on the other, nor does it cover interventions that alter beliefs, feasibility, factual information, or perceived consequences.

Natural extensions include incomplete or state-dependent consequence rankings, stochastic choice, and settings in which the ordering of deliberative modes must itself be inferred from data. The present results provide an ordinal benchmark: directional deliberation can generate substantial and testable restrictions on observed choice without a target utility function or a complete rule for resolving every cross-dimensional tradeoff.

\section*{Statements and Declarations}
\noindent\textbf{Competing interests.} The author declares no competing interests.\par
\noindent\textbf{Data availability.} No data were generated or analyzed in this study; the paper is entirely theoretical.\par

\printbibliography

@book{saheehinternational1997,
  author    = {{Saheeh International}},
  title     = {The Qur'an: Arabic Text with Corresponding English Meanings},
  year      = {1997},
  publisher = {Abul-Qasim Publishing House},
  location  = {Jeddah},
  isbn      = {9960792633},
  langid    = {english},
  note      = {English translation of the meanings of the Qur'an}
}

@article{BernheimRangel2009,
  author  = {Bernheim, B. Douglas and Rangel, Antonio},
  title   = {Beyond Revealed Preference: Choice-Theoretic Foundations for Behavioral Welfare Economics},
  journal = {Quarterly Journal of Economics},
  year    = {2009},
  volume  = {124},
  number  = {1},
  pages   = {51--104},
  doi     = {10.1162/qjec.2009.124.1.51}
}

@article{GulPesendorfer2001,
  author  = {Gul, Faruk and Pesendorfer, Wolfgang},
  title   = {Temptation and Self-Control},
  journal = {Econometrica},
  year    = {2001},
  volume  = {69},
  number  = {6},
  pages   = {1403--1435},
  doi     = {10.1111/1468-0262.00252}
}

@article{Plott1973,
  author  = {Plott, Charles R.},
  title   = {Path Independence, Rationality, and Social Choice},
  journal = {Econometrica},
  year    = {1973},
  volume  = {41},
  number  = {6},
  pages   = {1075--1091},
  doi     = {10.2307/1914037}
}

@article{SilvaLeanderSeth2017,
  author  = {Silva-Leander, Sebastian and Seth, Suman},
  title   = {Revealed Preferences with Plural Motives: Axiomatic Foundations of Normative Assessments in Non-Utilitarian Welfare Economics},
  journal = {Social Choice and Welfare},
  year    = {2017},
  volume  = {48},
  number  = {3},
  pages   = {505--517},
  doi     = {10.1007/s00355-016-1020-x}
}

@book{KemenySnell1962,
  author    = {Kemeny, John G. and Snell, J. Laurie},
  title     = {Mathematical Models in the Social Sciences},
  publisher = {Blaisdell Publishing Company},
  address   = {Waltham, MA},
  year      = {1962}
}

@article{Grandmont1978,
  author  = {Grandmont, Jean-Michel},
  title   = {Intermediate Preferences and the Majority Rule},
  journal = {Econometrica},
  year    = {1978},
  volume  = {46},
  number  = {2},
  pages   = {317--330},
  doi     = {10.2307/1913903}
}

@article{BossertSprumont2014,
  author  = {Bossert, Walter and Sprumont, Yves},
  title   = {Strategy-Proof Preference Aggregation: Possibilities and Characterizations},
  journal = {Games and Economic Behavior},
  year    = {2014},
  volume  = {85},
  pages   = {109--126},
  doi     = {10.1016/j.geb.2014.01.015}
}

@inproceedings{HaretWallner2022,
  author    = {Haret, Adrian and Wallner, Johannes Peter},
  title     = {An Axiomatic Approach to Revising Preferences},
  booktitle = {Proceedings of the AAAI Conference on Artificial Intelligence},
  year      = {2022},
  volume    = {36},
  number    = {5},
  pages     = {5676--5683},
  doi       = {10.1609/aaai.v36i5.20509}
}

@article{Freund2005,
  author  = {Freund, Michael},
  title   = {Revising Preferences and Choices},
  journal = {Journal of Mathematical Economics},
  year    = {2005},
  volume  = {41},
  number  = {3},
  pages   = {229--251},
  doi     = {10.1016/j.jmateco.2003.11.007}
}

@article{Spohn2025Reflective,
  author  = {Spohn, Wolfgang},
  title   = {Reflective Choice: A Novel Decision Rule for Changing Decision Situations},
  journal = {Theory and Decision},
  year    = {2025},
  doi     = {10.1007/s11238-025-10057-9}
}

@article{GruneYanoff2025Constructed,
  author  = {Grüne-Yanoff, Till},
  title   = {Intending Now or Later: The Rationality of Constructed Preferences},
  journal = {Theory and Decision},
  year    = {2025},
  doi     = {10.1007/s11238-025-10045-z}
}

@article{Petersen2026Dynamic,
  author  = {Petersen, Sami},
  title   = {Dynamic Choice without the Completeness Axiom},
  journal = {Theory and Decision},
  year    = {2026},
  doi     = {10.1007/s11238-025-10052-0}
}

@article{BreigFeldman2026RP,
  author  = {Breig, Zachary and Feldman, Paul},
  title   = {Revealed Preference Tests for Linear Probability--Prize Tradeoffs},
  journal = {Theory and Decision},
  year    = {2026},
  volume  = {100},
  number  = {2},
  pages   = {403--425},
  doi     = {10.1007/s11238-025-10089-1}
}

@article{BreigFeldman2024Revisions,
  author  = {Breig, Zachary and Feldman, Paul},
  title   = {Revealing Risky Mistakes through Revisions},
  journal = {Journal of Risk and Uncertainty},
  year    = {2024},
  volume  = {68},
  pages   = {227--254},
  doi     = {10.1007/s11166-024-09429-3}
}

@article{DushnikMiller1941,
  author  = {Dushnik, Ben and Miller, E. W.},
  title   = {Partially Ordered Sets},
  journal = {American Journal of Mathematics},
  year    = {1941},
  volume  = {63},
  number  = {3},
  pages   = {600--610},
  doi     = {10.2307/2371374}
}

@article{Kendall1938,
  author  = {Kendall, M. G.},
  title   = {A New Measure of Rank Correlation},
  journal = {Biometrika},
  year    = {1938},
  volume  = {30},
  number  = {1/2},
  pages   = {81--93},
  doi     = {10.1093/biomet/30.1-2.81}
}

@article{Richter1966,
  author  = {Richter, Marcel K.},
  title   = {Revealed Preference Theory},
  journal = {Econometrica},
  year    = {1966},
  volume  = {34},
  number  = {3},
  pages   = {635--645},
  doi     = {10.2307/1909773}
}

\end{document}